\documentclass[conference]{IEEEtran}
\usepackage{amsmath,amssymb,amsfonts}
\usepackage{caption}
\usepackage{algorithmic}
\usepackage{graphicx}
\usepackage{subcaption}
\usepackage{float}
\usepackage{cite}
\usepackage{textcomp}
\usepackage{xcolor}
\usepackage{amsthm}
\IEEEoverridecommandlockouts
\newtheorem{proposition}{Proposition}

\newcommand{\be}{\begin{equation}}
\newcommand{\ee}{\end{equation}}
\newcommand{\bea}{\begin{eqnarray}}
\newcommand{\eea}{\end{eqnarray}}

\begin{document}

\title{Making AFDM Secure Against Eavesdroppers: A Phase Function Design Approach\\

\thanks{This publication has emanated from research conducted with the financial support of Research Ireland under the US-Ireland R\&D Partnership Programme Grant Numbers 24/US/4013 and 21/US/3757 and CONNECT- grant 13/RC/2077 P2.}
}

\author{\IEEEauthorblockN{Hengxuan Liu\IEEEauthorrefmark{1}, Vincent Savaux\IEEEauthorrefmark{2}
and 
Arman Farhang\IEEEauthorrefmark{1}}
 
\IEEEauthorblockA{\IEEEauthorrefmark{1}Department of Electronic and Electrical Engineering, Trinity College Dublin, Ireland \\}
\IEEEauthorblockA{\IEEEauthorrefmark{2}b\textless{}\textgreater{}com, Cesson-Sévigné, France \\}
 
 \IEEEauthorblockA{
Email: \{liuh6, arman.farhang\}@tcd.ie, vincent.savaux@b-com.com}

}
\maketitle

\begin{abstract}
Affine frequency division multiplexing (AFDM) has recently emerged as a promising waveform for high-mobility communications due to its resilience to Doppler effects and its advantages for integrated sensing and communication (ISAC). 
AFDM modulates transmit data symbols using chirp subcarriers with two adjustable parameters. One is used for dealing with the Doppler effect and the second parameter can be used for physical layer security (PLS). In this paper, we focus on designing the second chirp parameter in the form of a generic phase function to enhance the robustness of the waveform against brute-force demodulation by the eavesdropper. In particular, we first derive a design criterion that reveals the brute-force demodulation complexity depends on the first derivative of the phase function. Then, we introduce a family of phase functions that can increase the brute-force demodulation complexity in an unbounded and controllable manner, while preserving chirp structure of AFDM.  
Our simulation results demonstrate that the proposed phase function design enhances the PLS performance of AFDM by several orders of magnitude compared with the conventional AFDM in terms of brute-force demodulation complexity.

\end{abstract}

\section{Introduction} 

With the emergence of new use cases and application areas, e.g., high-speed railways, unmanned aerial vehicles (UAVs), and autonomous driving, support for high-mobility communication and sensing has become an essential requirement for next-generation networks \cite{6G1,6G2}. 
In these scenarios, the high-mobility mulitpath wireless channel environment leads to significant delay and Doppler spread.
It is well-known that orthogonal frequency division multiplexing (OFDM) in current standards loses its benefits under such channel conditions \cite{OFDMICI}. 
This motivates the need for new waveforms that are robust to doubly-selective channels while offering sensing capabilities.

With this motivation, new waveforms based on signaling in domains alternative to the conventional time-frequency domain have emerged. Most notably, these include orthogonal time frequency space (OTFS) as a delay-Doppler domain modulation technique, \cite{OTFS}, and affine frequency division multiplexing (AFDM) as an affine domain modulation technique \cite{AFDM}.
Data transmission in these domains enables the exploitation of the full diversity inherent in doubly selective channels, thereby enhancing robustness to channel time variations for reliable communication. 
Moreover, in these domains, delay and Doppler shifts associated with different propagation paths are represented in a structured and separable manner, which is well suited for sensing \cite{ISACdd}. This makes OTFS and AFDM attractive in the context of integrated sensing and communication (ISAC), i.e., a new feature of the sixth generation wireless networks \cite{isac, Isacotfs}. 
However, ISAC introduces physical layer security (PLS) threats, as eavesdroppers may exploit sensing information to estimate the channel state information (CSI) at legitimate users' locations. This highlights the importance of taking into account PLS properties of the new waveforms. 
AFDM, in particular, has recently attracted attention as a chirp-based modulation technique with promising properties for secure communications \cite{brute-force,deltac2,permutation,c_2hopping,c2change,parameter01}. 
The PLS superiority of AFDM over OTFS has been reported in \cite{brute-force}. Despite existing research on the PLS aspects of AFDM \cite{brute-force,deltac2,permutation,c_2hopping,c2change,parameter01}, this line of research remains at an early stage of development, with several important gaps still to be addressed. Hence, the focus of this paper is to further improve the PLS performance of AFDM.

AFDM has two adjustable chirp parameters, namely $c_1$ and $c_2$. The parameter $c_1$ is adjusted to deal with the Doppler effect while $c_2$ offers greater design flexibility \cite{AFDM}. In the context of PLS, $c_1$ and $c_2$ may be kept secret to prevent eavesdroppers from correctly demodulating the signal \cite{AFDMsecure}. However, since $c_1$ is constrained by the maximum Doppler shift of the channel, this paper mainly focuses on $c_2$.  
The intrinsic brute-force complexity of AFDM has been shown to scale quadratically with the number of subcarriers \cite{brute-force}. To further enhance the PLS of AFDM, existing works have exploited $c_2$ in different ways. In \cite{deltac2}, the chirp parameters, including $c_2$ are varied over time across different AFDM symbols through parameter hopping. Although this increases the uncertainty over multiple symbols, the brute-force complexity associated with each individual symbol remains unchanged. In \cite{permutation}, permutations are applied to the $c_2$-related chirp sequence to achieve quantum-resilient PLS. Since this approach does not change $c_2$ itself, it is compatible with $c_2$ variation for further security enhancement. In \cite{c_2hopping}, the $c_2$ values are selected from a pre-designed codebook, which increases the parameter-search space at the eavesdropper. However, the scheme relies on pseudo-random codebook selection rather than a systematic design of the $c_2$-associated phase structure. Therefore, the complexity is not explicitly controlled through the waveform design. In \cite{c2change}, a secure AFDM scheme is proposed in which a set of $c_2$ values is derived from the reciprocal channel between legitimate users. This scheme assumes that the eavesdropper experiences a different channel from that of the legitimate users and therefore cannot obtain the same $c_2$ set for demodulation. While this method redesigns the $c_2$ structure and can increase the demodulation complexity in a more explicit manner, its security is inherently channel-dependent. If the eavesdropper obtains reliable CSI of the legitimate link, the corresponding security advantage may be significantly weakened. This limitation becomes particularly relevant in ISAC scenarios, where sensing functionalities may make channel-related information more accessible to potential eavesdroppers.

Therefore, in this paper, we propose a $c_2$-associated phase function design to improve the physical layer security of AFDM by enhancing its robustness against brute-force demodulation. The proposed method is channel-independent and can render the brute-force search complexity theoretically unbounded, even when the eavesdropper has perfect CSI. In particular, we show that the brute-force robustness of the waveform is governed by the first derivative of the phase function with respect to $c_2$. Motivated by this insight, we introduce a family of phase functions that preserves the chirp structure of AFDM while increasing the brute-force demodulation complexity in a controllable manner. Since the proposed method focuses on parameter design, it can also be integrated with chirp-parameter hopping \cite{deltac2} or chirp permutation \cite{permutation} to further increase the search complexity and enhance security. Our simulation results confirm that the proposed design significantly increases the brute-force demodulation complexity by multiple orders of magnitude.

The rest of the paper is organized as follows. Section II reviews the AFDM signal model and introduces the eavesdropping scenario. The theoretical robustness analysis and the proposed phase design are presented in Section III. Simulation results are presented in Section IV. Finally, Section V concludes the paper.

\subsubsection*{Notations}
Scalars, vectors, and matrices are denoted by italic letters, bold lower-case letters, and bold upper-case letters, respectively. Superscripts ${(\cdot)^{\rm T}}$ and ${(\cdot)^{\rm H}}$ denote transpose and Hermitian transpose, respectively. $\mathbb{R}$ and $\mathbb{C}$ denote the sets of real and complex numbers, respectively. $\mathbf{I}_N$ denotes the $N \times N$ identity matrix, and $\mathrm{diag}(\cdot)$ denotes a diagonal matrix formed from its argument. For a vector $\mathbf{s}$, $s[n]$ denotes its $n$-th element. The notation $[\cdot]_N$ denotes the modulo-$N$ operation. Finally, $j=\sqrt{-1}$ denotes the imaginary unit.

\section{AFDM Principles and Security Aspects}

This section presents the AFDM signal model and the transmission scenario for security analysis in this study. AFDM employs the inverse discrete affine Fourier transform (IDAFT) to map quadrature amplitude modulated (QAM) data symbols from the affine domain to the time domain~\cite{AFDM}. AFDM deploys chirp subcarriers as orthogonal basis functions for data transmission. 
Considering $N$ chirp subcarriers, the modulated signal in time domain can be expressed as
\begin{equation}
s[n] = \sum_{m=0}^{N-1} x_m q_m[n], 
\quad n = 0, \ldots, N-1,
\label{eq:afdm_signal}
\end{equation}
where $x_m$ denotes the $m^{\rm th}$ transmit data symbol and $q_m[n]$ is the transmitter basis function corresponding to the $m^{\rm th}$ chirp subcarrier, i. e.,
\begin{equation}
q_m[n] = \frac{1}{\sqrt{N}}
e^{j2\pi\left(c_1 n^2 + c_2 m^2 + \frac{nm}{N}\right)}.
\end{equation}
The parameters $c_1,c_2\in\mathbb{R}$ are AFDM chirp parameters. 
By stacking the signal and data samples into vectors $\mathbf{s}=[s[0],\ldots,s[N-1]]^{\rm T}$ and $\mathbf{x}=[x_0,\ldots,x_{N-1}]^{\rm T}$, respectively, \eqref{eq:afdm_signal} can be written in matrix form as $\mathbf{s} = \mathbf{Q}\mathbf{x}$,
where
\begin{equation}
\mathbf{Q}
=
\mathbf{\Lambda}_{c_1}^{\mathrm H}
\mathbf{F}_N^{\mathrm H}
\mathbf{\Lambda}_{c_2}^{\mathrm H}
,
\label{eq:afdm_mod_matrix}
\end{equation}
is the unitary AFDM modulation matrix, with
$
\mathbf{\Lambda}_{c}
=
\operatorname{diag}
\left(
[1,
e^{-j2\pi c},
\ldots,
e^{-j2\pi c\,(N-1)^2}]^{\rm T}
\right),
$ $c \in \{c_1,c_2\}$, 
and $\mathbf{F}_N$ denoting the normalized $N$-point discrete Fourier transform (DFT) matrix with the elements $[\mathbf{F}_N]_{p,q}=\frac{1}{\sqrt{N}}e^{\frac{-j2\pi pq}{N}}$ for $p,q=0,\ldots,N-1$.

To cope with the delay spread of the multipath channel and avoid inter symbol interference, a chirp-periodic prefix (CPP) is appended at the beginning of each AFDM symbol. If the maximum delay shift of the channel is $l_{\max}$, the CPP length $L$ should be larger than or equal to $l_{\max}-1$. Therefore, the CPP is defined as
\begin{equation}
s[n] = s[N+n] e^{-j2\pi c_1 \left(N^2 + 2Nn\right)}, \quad n=-L,\ldots,-1.
\end{equation}

After appending CPP, the transmit signal propagates through a doubly dispersive channel. At the receiver side, the received discrete-time signal after CPP removal can be expressed as
\begin{equation}
\mathbf{r} = \mathbf{H}\mathbf{s} + \mathbf{w},
\label{eq_receive_signal}
\end{equation}
where $\mathbf{w}$ denotes the complex additive white Gaussian noise vector with  the variance of $\sigma^2$, i.e., $\mathbf{w}\sim \mathcal{CN}(\mathbf{0},\sigma^2 \mathbf{I}_N)$. The channel matrix $\mathbf{H}$ is represented as
\begin{equation}
\mathbf{H} = \sum_{p=0}^{P-1} 
h_p \, 
\boldsymbol{\Gamma}_{\mathrm{CPP},p} \, 
\boldsymbol{\Pi}^{l_p} \, 
\boldsymbol{\Delta}^{\nu_p},
\end{equation}
where $h_p$, $l_p\in [0,l_{P-1}]$ and $\nu_p \in [-\nu_{\max}, \nu_{\max}]$ denote the complex channel coefficient, delay and Doppler shifts for path $p$, respectively. $P$ is the total number of paths, $l_{\max}=l_{P-1}$ and $\nu_{\max}$ is the maximum Doppler shift of the channel. $\boldsymbol{\Pi}$ is the forward cyclic-shift matrix and $
\boldsymbol{\Delta} =
\operatorname{diag}
\left([1,
e^{-j \frac{2\pi}{N}},
...,
e^{-j  \frac{2\pi}{N}(N-1)}]^{\rm T}
\right)
$ is the unit Doppler shift matrix. 
The matrix $\boldsymbol{\Gamma}_{\mathrm{CPP},p}$ models the effect of the CPP which is given by
\begin{equation}
\boldsymbol{\Gamma}_{\mathrm{CPP},p}
=
\operatorname{diag}\!\left(
\begin{cases}
e^{-j2\pi c_1 (N^2 - 2N(l_p-n))}, & n<l_p, \\
1, & n\ge l_p.
\end{cases}
\right)
\label{gamacpp}
\end{equation}

The received signal in (\ref{eq_receive_signal}) is demodulated and transformed to affine domain by applying the DAFT operation as
\begin{equation}
\begin{aligned}
\mathbf{y}
&= \mathbf{Q}^{\rm H}\mathbf{r} \\
&= \mathbf{\Lambda}_{c_{1}}
\mathbf{F}_N
\mathbf{\Lambda}_{c_{2}}\mathbf{r} \\
&= \mathbf{\Lambda}_{c_{1}}
\mathbf{F}_N
\mathbf{\Lambda}_{c_{2}}
\mathbf{H}
\mathbf{\Lambda}_{c_{1}}^{\rm H}
\mathbf{F}_N^{\rm H}
\mathbf{\Lambda}_{c_{2}}^{\rm H}
\mathbf{x}
+ \tilde{\mathbf{w}} \\
&= \mathbf{H}_{\rm eff}\mathbf{x} + \tilde{\mathbf{w}},
\end{aligned}
\end{equation}
where
$
\mathbf{H}_{\rm eff}
=
\mathbf{\Lambda}_{c_{1}}
\mathbf{F}_N
\mathbf{\Lambda}_{c_{2}}
\mathbf{H}
\mathbf{\Lambda}_{c_{1}}^{\rm H}
\mathbf{F}_N^{\rm H}
\mathbf{\Lambda}_{c_{2}}^{\rm H},
$
and $\tilde{\mathbf{w}} = \mathbf{Q}^{\rm H}\mathbf{w}.$ Since $\mathbf{Q}$ is unitary, $\tilde{\mathbf{w}}$ has the same covariance as $\mathbf{w}$.

For equalization, we consider the minimum mean square error (MMSE) equalizer, given by
\begin{equation}
\hat{\mathbf{x}} = \mathbf{G}_{\rm MMSE}\mathbf{y},
\end{equation}
where
\begin{equation}
\mathbf{G}_{\rm MMSE}
=
\left(
\mathbf{H}_{\rm eff}^{\rm H}\mathbf{H}_{\rm eff}
+
\sigma_w^2 \mathbf{I}_{N}
\right)^{-1}
\mathbf{H}_{\rm eff}^{\rm H},
\end{equation}
with $\sigma_w^2$ denoting the noise variance.

 We consider a scenario in which a malicious eavesdropper attempts to intercept the communication between the base station and the legitimate user equipment. We assume a worst-case scenario in which the eavesdropper is perfectly synchronized and has perfect knowledge of the channel matrix $\mathbf{H}$ as well as the transmitted pilot sequence $\mathbf{x}$. 
 The only unknown parameters are the chirp parameters $c_1$ and $c_2$, which determine the demodulation matrix $\mathbf{Q}$. Although this assumption may appear rather strong, it is commonly adopted in PLS studies to characterize a worst-case eavesdropping scenario \cite{perfectcsi,deltac2}. Moreover, publicly known pilot signals make it reasonable to assume that the pilot sequence is available to the eavesdropper in practical wireless systems. The eavesdropper then attempts to recover the unknown chirp parameters through an exhaustive search procedure.  Accordingly, to mitigate the vulnerability of the legitimate user to brute-force demodulation, in the next section, we propose a PLS technique that increases the complexity of brute-force search to a computationally prohibitive level.
\section{Proposed Security-Oriented Phase Design}

In this section, we propose a phase design technique that improves the robustness of AFDM against brute-force demodulation.
In AFDM, $c_1$ cannot be significantly modified, as it is determined based on the maximum Doppler shift of the channel with $c_1=({2\nu_{\max}+1})/{2N}$ \cite{AFDM}. 
The impact of $c_1$ estimation errors on brute-force demodulation by an eavesdropper has already been analyzed in \cite{brute-force}. By contrast, $c_2$ offers much greater design flexibility as it can vary across chirp subcarriers or AFDM symbols without compromising orthogonality \cite{S-AFDM, subcarriers}.
As shown in \cite{deltac2}, a mismatch in $c_2$ alone is sufficient to prevent an eavesdropper from correctly demodulating the signal. Therefore, $c_2$ is a suitable parameter for security-oriented phase design. Motivated by this, we focus on the design of $c_2$ while assuming that $c_1$ is known or accurately estimated by the eavesdropper. In the following subsections, we first derive a design criterion for $c_2$ in the form of a generic phase function to enhance robustness against brute-force demodulation. Then, we introduce a family of phase functions with adjustable parameters that allow us to increase the brute-force search complexity without a bound.

\subsection{Phase Function Design Principle} 
Substituting the term $c_2 m^2$ with a general function $f(c_2,m)$ in (\ref{eq:afdm_signal}), the AFDM transmit signal can be expressed as
\begin{equation}
s[n] =
\frac{1}{\sqrt{N}}
\sum_{m=0}^{N-1}
x_m
e^{
j2\pi
\left(
c_1 n^2 + \frac{mn}{N} + f(c_2,m)
\right)
}.
\label{eq_new_modulation}
\end{equation}

Considering $\widehat{c}_2$ as the estimate of $c_2$ at the eavesdropper, the  robustness of the system against exhaustive search depends on the mismatch interval denoted by $\Delta_{c_2}$. If $|\widehat{c}_2-c_2| \geq |\Delta_{c_2}|$, the eavesdropper is unable to correctly demodulate the signal. In conventional AFDM, where $f(c_2,m)=c_2m^2$, the periodicity of $c_2$ is 1 and the interval $(0,1]$ represents its effective search range \cite{parameter01}. Hence, the exhaustive-search complexity is proportional to $\frac{1}{\Delta_{c_2}}$ and a smaller mismatch interval leads to improved robustness against brute-force demodulation \cite{brute-force}. Based on this observation, we propose an AFDM phase design criterion, in Proposition~\ref{prop1}, which allows us to tune $\Delta_{c_2}$ at any arbitrarily small value.

\begin{proposition}
For the AFDM modulation with a generalized phase function, $f(c_2,m)$, for any $c_2 \! \in \! \mathbb{R}$ and $m\!=\!0,1,..,N-1$, the admissible mismatch interval $\Delta_{c_2}$ for an eavesdropper to properly demodulate the signal scales as
\begin{equation}
|\Delta_{c_2}|
\propto
\frac{1}{
\left|
\frac{\partial f(c_2,m)}{\partial c_2}
\right|
}.
\label{eq_pro_results}
\end{equation}
\label{prop1}
\end{proposition}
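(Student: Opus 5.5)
The plan is to study what happens when the eavesdropper demodulates with a mismatched parameter $\widehat{c}_2=c_2+\delta$. We assume $c_1$, $\mathbf{H}$ and the pilots are known, as in the scenario of Section II.

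Write the modulation in \eqref{eq_new_modulation} as $\mathbf{s}=\mathbf{\Lambda}_{c_1}^{\rm H}\mathbf{F}_N^{\rm H}\mathbf{\Phi}_{c_2}\mathbf{x}$, where $\mathbf{\Phi}_{c_2}=\operatorname{diag}(e^{j2\pi f(c_2,m)})$. The eavesdropper applies $\mathbf{\Phi}_{\widehat{c}_2}^{\rm H}\mathbf{F}_N\mathbf{\Lambda}_{c_1}$ and then the MMSE equalizer built from $\mathbf{H}$ with the assumed $\widehat{c}_2$. Because everything except the diagonal phase matrix matches, the equalized output is
\[
\hat{\mathbf{x}}\approx \mathbf{D}_\delta\mathbf{x}+\text{noise},\qquad
\mathbf{D}_\delta=\operatorname{diag}\big(e^{j2\pi (f(c_2,m)-f(c_2+\delta,m))}\big).
\]
This holds at least in the high-SNR regime, where $\mathbf{G}_{\rm MMSE}\mathbf{H}_{\rm eff}\to\mathbf{I}$. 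I will verify this reduction first: the mismatched effective channel factors as $\mathbf{\Phi}^{\rm H}_{\widehat{c}_2}\mathbf{\Phi}_{c_2}$ composed with the matched effective channel, and this factorization uses only the unitarity and diagonality of the phase matrices.

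The decoding condition then becomes a condition on the per-symbol phase error $\theta_m(\delta)=2\pi\,[f(c_2+\delta,m)-f(c_2,m)]$. Correct demodulation (symbol error below the constellation's decision region, or error-vector magnitude below a threshold) requires $|\theta_m(\delta)|\le \epsilon$ for the relevant $m$. Here $\epsilon$ is a constant that depends only on the QAM order and SNR, for example $\epsilon=\pi/M$ for $M$-PSK-like decision angles.

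Next, Taylor-expand in $\delta$:
\[
\theta_m(\delta)=2\pi\,\frac{\partial f(c_2,m)}{\partial c_2}\,\delta+O(\delta^2).
\]
For small mismatch the admissible set is $|\delta|\lesssim \epsilon/(2\pi|\partial f/\partial c_2|)$. Taking the worst $m$, that is, the largest derivative, gives $|\Delta_{c_2}|\propto 1/|\partial f/\partial c_2|$, which is \eqref{eq_pro_results}. As a sanity check, for $f=c_2m^2$ this gives $\Delta_{c_2}\propto 1/m^2\sim 1/N^2$, which reproduces the known quadratic brute-force complexity from \cite{brute-force}.

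The main obstacle is justifying the first-order linearization and the choice of the constant $\epsilon$. The proportionality is only meaningful when the threshold interval is small enough that the $O(\delta^2)$ term is dominated. That requires a bound on $\partial^2 f/\partial c_2^2$ relative to $(\partial f/\partial c_2)^2$ over the interval. I would state this as a mild regularity assumption, namely that $f$ is locally smooth in $c_2$, and interpret the proposition as a local scaling law. A secondary issue is aggregation over $m$. The admissible interval is governed by $\max_m|\partial f/\partial c_2|$, or by an RMS average if one uses an MSE criterion. I would phrase the argument with the MSE $\frac1N\sum_m|1-e^{j\theta_m}|^2\approx \frac{4\pi^2\delta^2}{N}\sum_m(\partial_{c_2}f)^2$. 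Setting this equal to a fixed tolerance yields the same inverse-derivative scaling.
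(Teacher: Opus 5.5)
Your proposal is correct and follows essentially the same route as the paper: mismatched demodulation leaves a per-symbol rotation $e^{j2\pi(f(c_2,k)-f(\widehat{c}_2,k))}$, which is expanded to first order in $\Delta_{c_2}$ and bounded by a phase tolerance $\varepsilon$, giving $|\Delta_{c_2}|\le|\varepsilon|/(2\pi|\partial f/\partial c_2|)$. The paper obtains this rotation in the noiseless, channel-free setting by substituting $s[n]$ directly into the demodulator, and only takes the worst $m$ after the proof; your additions are refinements rather than a different argument, namely the high-SNR MMSE reduction (which does hold, since the mismatched channel equals the eavesdropper's assumed $\widehat{\mathbf{H}}_{\rm eff}$ times $\mathbf{D}_\delta$), the control of the $O(\delta^2)$ term, and the max versus RMS aggregation over $m$.
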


\begin{proof}
Assume that the eavesdropper uses a mismatched chirp parameter
\[
\widehat{c}_2 = c_2 + \Delta_{c_2}
\]
for demodulation. The demodulated symbol $\widehat{x}_k$, for $k=0,1,\dots,N-1$, is given by
\begin{equation}
\widehat x_k =
\frac{1}{\sqrt{N}}
\sum_{n=0}^{N-1}
s[n]
e^{
-j2\pi
\left(
c_1 n^2 + \frac{kn}{N} + f(\widehat{c}_2,k)
\right)
}.
\label{eq_xhat}
\end{equation}
By substituting (\ref{eq_new_modulation}) into \eqref{eq_xhat}, we obtain
\begin{equation}
\begin{aligned}\label{eqn:new_demod}
\widehat x_k 
&=
\frac{1}{N}
\sum_{n=0}^{N-1}
\sum_{m=0}^{N-1}
x_m
e^{
j2\pi
\left(
f(c_2,m)
- f(\widehat{c}_2,k)
+
\frac{(m-k)n}{N}
\right)} \\
&=x_k
e^{
j2\pi
\left(
f(c_2,k)-f(\widehat{c}_2,k)
\right)
},
\end{aligned}
\end{equation}
where the second line in \eqref{eqn:new_demod} is obtained considering
$
\sum_{n=0}^{N-1}
e^{
j2\pi \frac{(m-k)n}{N}
}=N\delta[m-k]
$.

Using the first-order Taylor expansion of $f(\widehat{c}_2,k)=f(c_2+\Delta_{c_2},k)$ in the vicinity of  $c_2$, we have
\begin{equation}\label{eqn:Taylor_f}
\begin{aligned}
f(\widehat{c}_2,k) = f(c_2+\Delta_{c_2},k)\approx
f(c_2,k)
+
\Delta_{c_2}
\frac{\partial f(c_2,k)}{\partial c_2}
.
\end{aligned}
\end{equation}
Substituting \eqref{eqn:Taylor_f} into \eqref{eqn:new_demod}, we have
\begin{equation}\label{eqn:xhat_Taylor}
\widehat x_k
=
x_k
e^{
-j2\pi
\left(
\Delta_{c_2}
\frac{\partial f(c_2,k)}{\partial c_2}
\right)
}.
\end{equation}
For sufficiently small values of $\Delta_{c_2}$, \eqref{eqn:xhat_Taylor} can be approximated as
\begin{equation}
\widehat x_k
\approx
x_k
\left[
1
-j2\pi
\left(
\Delta_{c_2}
\frac{\partial f(c_2,k)}{\partial c_2}
\right)
\right].
\end{equation}
To ensure correct demodulation, we require the residual phase distortion to remain below an arbitrarily small threshold $\varepsilon$, i.e.,
\begin{equation}
2\pi
\left|
\Delta_{c_2}
\frac{\partial f(c_2,k)}{\partial c_2}
\right|
\le
|\varepsilon|,
\end{equation}
where $|\varepsilon| \ll 1$ denotes the maximum tolerable phase error.
Therefore,
\begin{equation}
|\Delta_{c_2}|
\le
\frac{|\varepsilon|}
{2\pi
\left|
\frac{\partial f(c_2,k)}{\partial c_2}
\right|
},
\label{eq_results_FD_SD}
\end{equation}
which proves \eqref{eq_pro_results}.
\end{proof}

For conventional AFDM, where $f(c_2,m)=c_2m^2$, it then follows from Proposition~\ref{prop1} that $\Delta_{c_2}\propto \frac{1}{m^2}$. 
Since Proposition~\ref{prop1} holds for any $m=0,1,..,N-1$, the overall mismatch interval is determined as the smallest achievable $\Delta_{c_2}$ value, which yields
\begin{equation}
    \Delta_{c_2}\propto \frac{1}{(N-1)^2}. 
\end{equation}
As the exhaustive-search complexity is proportional to $\frac{1}{\Delta_{c_2}}$, the brute-force demodulation complexity of conventional AFDM scales as $\mathcal{O}(N^2)$, which is consistent with the result reported in \cite{brute-force}. More importantly, Proposition~\ref{prop1} shows that the mismatch interval is governed by the phase function. As shown in the following subsection, by properly designing the phase function, the mismatch interval can be reduced without bound, thereby enhancing the robustness of AFDM against brute-force demodulation.

\subsection{Proposed Phase Function}\label{subsec:Proposed_Phase_Function}

Based on the above analysis and the result in Proposition~\ref{prop1}, a desirable phase function should ensure that $\left|\frac{\partial f(c_2,m)}{\partial c_2}\right|$ increases rapidly with the subcarrier index $m$. Consequently, the mismatch interval is reduced and the brute-force search complexity is increased. To this end, we propose the following family of phase functions that satisfy the above condition.
\begin{equation}
f(c_2,m)=\kappa m^a \cos\!\left(\pi c_2 m^b\right),
\label{example1}
\end{equation}
where $(\kappa,a,b)\in\mathbb{R}^3$ and $b\ge 0$. Consequently, $\mathbf{\Lambda}_{c_2}^{\mathrm H}$ in \eqref{eq:afdm_mod_matrix} becomes 
\begin{equation}
    \mathbf{\Lambda}_{c_2}^{\mathrm H}
=
\operatorname{diag}
\left(
1,
e^{-j2\pi \kappa \cos(\pi c_2)},
\ldots,
e^{-j2\pi \kappa (N-1)^a \cos(\pi c_2 (N-1)^b)}
\right).
\end{equation}

The first derivative of \eqref{example1} with respect to $c_2$ is
\begin{equation}
\left|\frac{\partial f(c_2,m)}{\partial c_2}\right|
=
\left|\kappa\pi m^{a+b}\sin\!\left(\pi c_2 m^b\right)\right|.
\label{eq_proposed_derivative}
\end{equation}
For operating points such that
\begin{equation}
\left|\sin\!\left(\pi c_2 m^b\right)\right|
\neq 0,
\label{eq_nonzero_sine}
\end{equation}
we obtain $\Delta_{c_2}\propto \frac{1}{m^{a+b}}$ for any $m$ value. Consequently, the system-wide mismatch interval for $c_2$ scales as $\Delta_{c_2}=\mathcal{O}\!\left(\frac{1}{N^{a+b}}\right)$, and the corresponding brute-force complexity for $c_2$ becomes $\mathcal{O}(N^{a+b})$.
In the following, since $|\cos(\pi c_2 m^b)| \in [0,1]$, we fix $a=2$, such that the proposed phase function preserves the quadratic chirp structure of conventional AFDM.
In this case, the complexity becomes 
\begin{equation}
    \mathcal{O}(N^{2+b}).
    \label{eq_scale}
\end{equation}
This result indicates that the brute-force complexity can be increased by a factor of approximately $N^b$ compared with conventional AFDM. Therefore, by properly choosing $b>0$, the proposed phase function can significantly enhance the robustness against brute-force demodulation.

It is worth noting that the above scaling relies on the condition in \eqref{eq_nonzero_sine}. In particular, operating points at which $\sin\!\left(\pi c_2 m^b\right)=0$ should be avoided. For the proposed phase function, $c_2$ can take values within the range $(0,1]$ due to its periodicity of 1. In particular, $c_2=1$ should be avoided. This observation will also be confirmed in the Section~\ref{sec:Simulation}.

In \eqref{example1}, $\kappa$ can be chosen as an irrational number. It can be observed that $\kappa$ introduces an additional degree of freedom, which can further enlarge the exhaustive-search space together with $c_2$. The sensitivity of $\kappa$ can be analyzed similarly using Proposition~\ref{prop1}, by replacing the derivative with $\left|\frac{\partial f(\kappa,m)}{\partial \kappa}\right|$. From \eqref{example1}, we have
\begin{equation}
\left|\frac{\partial f(\kappa,m)}{\partial \kappa}\right|
=
\left|m^a \cos\!\left(\pi c_2 m^b\right)\right|.
\end{equation}
Moreover, there does not exist a $c_2\in(0,1]$ such that $\left|m^2 \cos\!\left(\pi c_2 m^b\right)\right|=0$ for all $m$. Therefore, the mismatch interval for $\kappa$, denoted by $\Delta_{\kappa}$, scales as $\mathcal{O}(\frac{1}{N^a})$, and the corresponding exhaustive-search complexity with respect to $\kappa$ scales as $\mathcal{O}(N^a)$. When $a=2$, the complexity associated with $\kappa$ is of the same order as that associated with $c_2$ in conventional AFDM. This makes it possible to perform chirp hopping on $\kappa$ from one AFDM symbol to the next, since the chirp-hopping technique is originally designed based on the $\Delta_{c_2}$ in conventional AFDM \cite{deltac2}.
When considering both $\kappa$ and $c_2$, the total brute-force demodulation complexity becomes 

\begin{equation}
    \mathcal{O}(N^{b+4}).
    \label{scale}
\end{equation}

Finally, we emphasize that the proposed method preserves the AFDM modulation structure and directly redesigns the phase function associated with $c_2$. Therefore, it remains compatible with other AFDM-based security mechanisms, such as chirp-parameter hopping and chirp permutation.
\section{Simulation Results}
\label{sec:Simulation}
In this section, we numerically evaluate the performance of brute-force demodulation by an eavesdropper when the proposed phase function in \eqref{example1} is adopted for AFDM.
We consider a four-tap LTV channel with the normalized tap powers $\boldsymbol{\rho}=[0.1941,0.4056,0.2388,0.1615]^{\rm T}$ and the corresponding delays of $l_p=p$, $p=0,\dots,3$. 
The normalized Doppler shifts to Doppler spacing are set to $\nu=[0,-0.3,0.8,3]$, with $\nu_{\max}=3$. 
An MMSE equalizer is employed at the receiver. 
We set $N=64$ and $c_1=({2\nu_{\max}+1})/{2N}$. In addition, we set $c_2=0.2$ and $\kappa=\sqrt{2}-1$ unless otherwise stated. The parameter $a$ is set to $2$ to keep the design close to the chirp structure of conventional AFDM. The phase function then becomes $(\sqrt{2}-1)m^2\cos(0.2 \pi m^b)$. All results are obtained using QPSK modulation over $10^5$ Monte Carlo realizations.

In Fig.~\ref{Ber_vs_delta}, we present the BER performance of the proposed phase designs for two different values of $b$, together with conventional AFDM, at a fixed SNR of $25\,\mathrm{dB}$. For conventional AFDM, we set the phase function as $(\sqrt{2}-1)m^2$. The BER is plotted as a function of $\Delta_{c_2}$ for the different phase functions. By taking $10^{-3}$ as the acceptable BER threshold at $25\,\mathrm{dB}$, we observe that the mismatch interval of conventional AFDM is approximately $4\times10^{-5}$, which is consistent with the existing literature \cite{brute-force}. For $b=1$, the mismatch interval is reduced to around $1.7\times10^{-7}$, corresponding to about two orders of magnitude increase in the brute-force search complexity. For $b=10$, the mismatch curve exhibits an impulse-like shape, suggesting an extremely small mismatch interval. In fact, the mismatch interval can be made arbitrarily small as $b$ increases.

\begin{figure}
    \centering
    \includegraphics[width=\columnwidth]{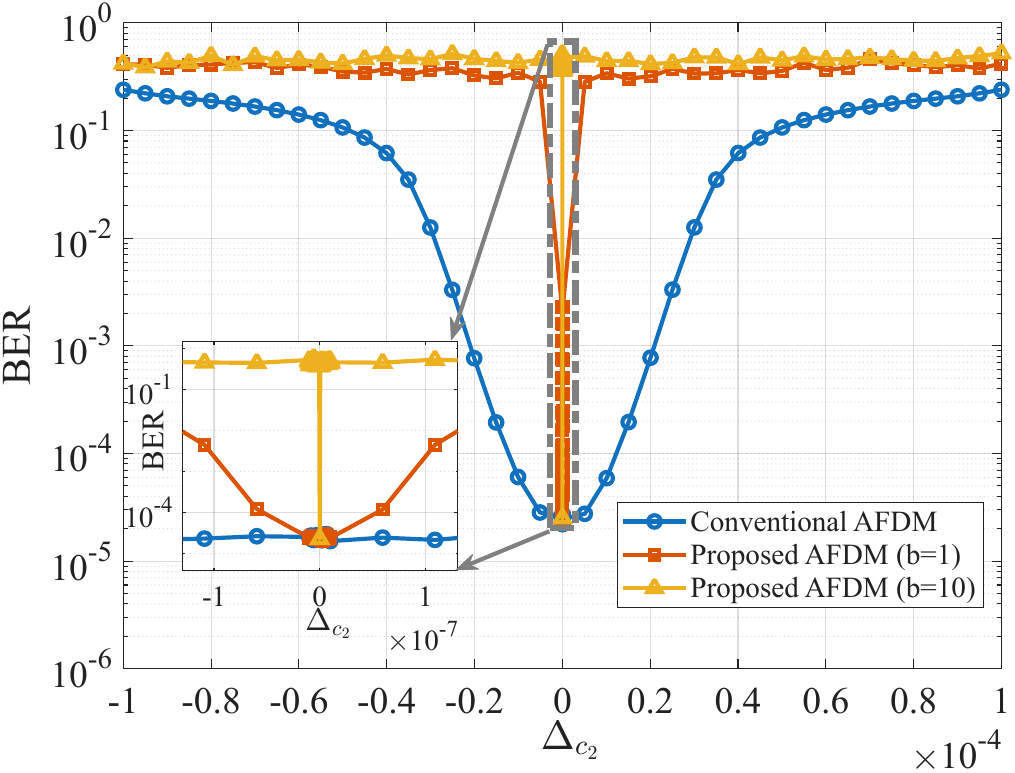}
    \caption{BER versus $\Delta_{c_2}$ for the proposed phase design with different values of $b$, compared with the conventional AFDM.}
    \label{Ber_vs_delta}
\end{figure}
To further examine the mismatch interval, we consider the case where $\Delta_{c_2}=10^{-5}$ for both $b=10$ and conventional AFDM. Iin Fig.~\ref{fig:BERVSSNR}, we plot the BER versus SNR in comparison with the perfectly matched case, i.e., $\Delta_{c_2}=0$. 
As can be seen from Fig.~\ref{Ber_vs_delta}, conventional AFDM achieves a very low BER at $\Delta_{c_2}=10^{-5}$. Therefore, $\Delta_{c_2}=10^{-5}$ is adopted to investigate whether the eavesdropper can still demodulate the signal when $b=10$. The results show that, when $\Delta_{c_2}=0$ (corresponding to perfect estimation, i. e., $\widehat{c}_2=c_2$), AFDM with the conventional phase function and our proposed phase function achieve the same performance. This confirms that our proposed design does not degrade the communication performance of conventional AFDM. 
In contrast, for $\Delta_{c_2}=10^{-5}$, the BER curve of our proposed design with $b=10$ remains at a very high error floor, while conventional AFDM exhibits about the same performance as that of $\Delta_{c_2}=0$. 
This confirms that our proposed design with $b=10$ is far more sensitive than the conventional AFDM to parameter mismatch and therefore, has a significantly smaller mismatch interval.
\begin{figure}
    \centering
    \includegraphics[width=\columnwidth]{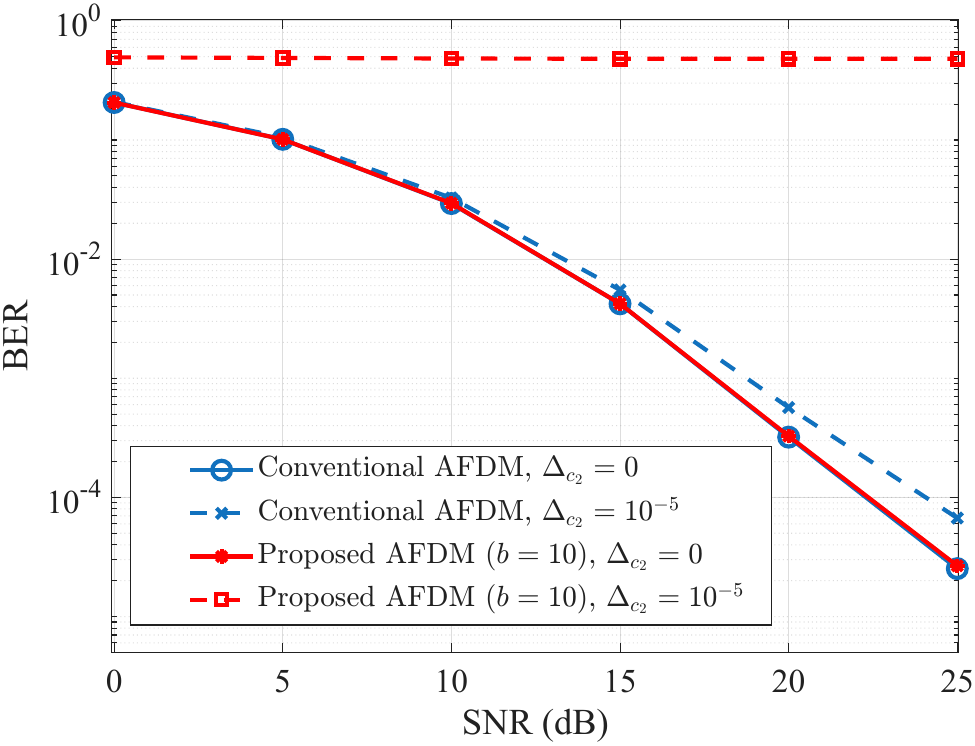}
    \caption{BER versus SNR for conventional AFDM and the proposed AFDM with $b=10$ under the perfectly matched and the mismatched case. 
    }
    \label{fig:BERVSSNR}
\end{figure}

\begin{figure}
    \centering
    \includegraphics[width=\columnwidth]{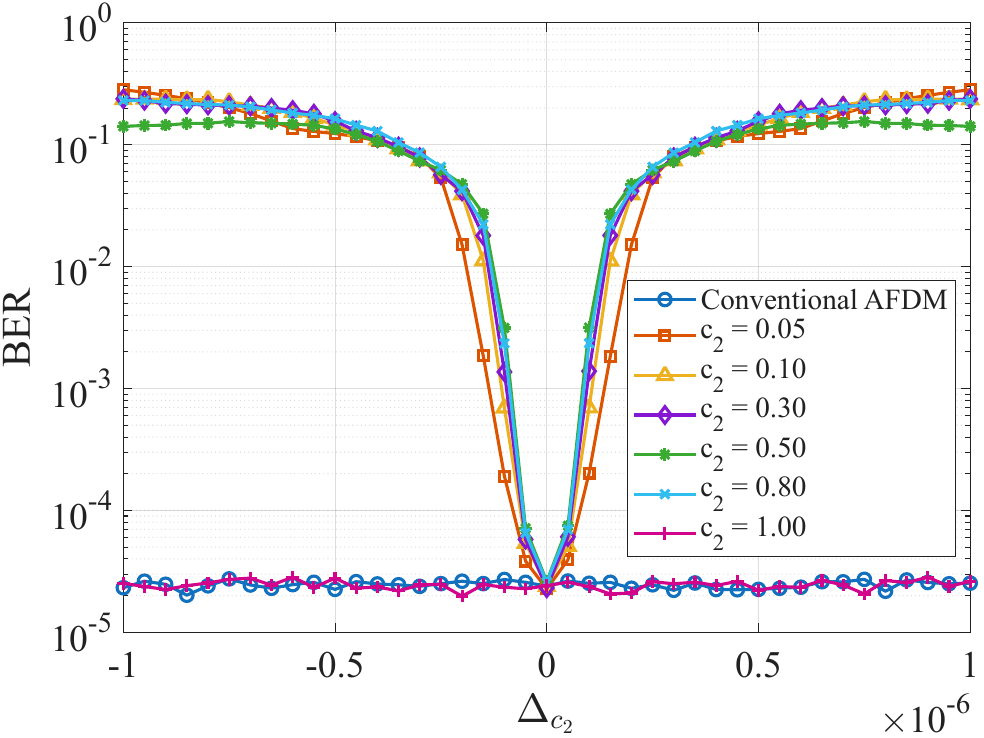}
    \caption{BER versus the parameter mismatch $\Delta_{c_2}$ for different values of $c_2$, compared with the conventional AFDM, with $b=1$.}
    \label{fig:differ_c2}
\end{figure}

Finally, we investigate whether different values of $c_2$ affect the mismatch interval while fixing $b=1$. As shown in Fig.~\ref{fig:differ_c2}, we test different values of $c_2 \in(0,1]$ and observe that, except for the case $c_2=1$, the mismatch interval remains around $1.7\times10^{-7}$. This finding is consistent with the analysis in Section~III. In particular, when $c_2=1$, the condition in \eqref{eq_nonzero_sine} cannot be satisfied. As a result, the mismatch interval increases and PLS performance deteriorates. Therefore, $c_2$ should not be set to $1$. For other values within the range $(0,1)$, the PLS performance remains nearly unchanged. Hence, the admissible range of $c_2$ in the proposed design remains very close to that of conventional AFDM. This allows for parameter hopping on top of the proposed phase design to further improve the robustness of AFDM against brute-force demodulation.

\section{Conclusion}
In this study, we proposed a new class of phase functions associated with the chirp parameter $c_2$ to improve the robustness of AFDM against brute-force demodulation. With the proposed phase functions, the brute-force search complexity can be increased by multiple orders of magnitude in a controllable manner. This allows the legitimate user to flexibly tune the search difficulty faced by the eavesdropper through the phase-function design. Simulation results show that the proposed phase design significantly improves the robustness of the system against eavesdroppers without degrading the communication performance of AFDM. Moreover, since the proposed method directly designs the phase function associated with $c_2$ without changing the underlying AFDM framework, it is flexible enough to be combined with the existing security strategies, such as chirp-parameter hopping and chirp permutation.

\bibliographystyle{IEEEtran}
\bibliography{refs}

\end{document}